\documentclass[usletter,12pt]{article}
\usepackage{basic_template}

\def\grad{\nabla}

\def\bb{\mathbf{b}}

\def\bx{\mathbf{x}}  
\def\by{\mathbf{y}}

\def\bo{\mathbf{0}}

\def\e{{\boldsymbol{\epsilon}}}

\def\n{{\boldsymbol{\nu}}}

\def\th{{\boldsymbol{\theta}}}

\def\b{{\boldsymbol{\beta}}}

\def\ph{{\boldsymbol{\phi}}}

\def\cB{\mathcal{B}}

\def\cE{\mathcal{E}}

\def\cG{\mathcal{G}}

\def\cL{\mathcal{L}}

\def\cS{\mathcal{S}}

\def\cX{\mathcal{X}}

\def\mC{\mathbb{C}}

\def\mR{\mathbb{R}}

\def\smskip{\smallskip}

\def\texitem#1{\par\smskip\noindent\hangindent 25pt
               \hbox to 25pt {\hss #1 ~}\ignorespaces}


\def\norm#1{\left\|#1\right\|}

\newcommand{\BEAS}{\begin{eqnarray*}}
\newcommand{\EEAS}{\end{eqnarray*}}
\newcommand{\BEA}{\begin{eqnarray}}
\newcommand{\EEA}{\end{eqnarray}}
\newcommand{\BEQ}{\begin{eqnarray}}
\newcommand{\EEQ}{\end{eqnarray}}
\newcommand{\BIT}{\begin{itemize}}
\newcommand{\EIT}{\end{itemize}}
\newcommand{\BNUM}{\begin{enumerate}}
\newcommand{\ENUM}{\end{enumerate}}

\newcommand{\BA}{\begin{array}}
\newcommand{\EA}{\end{array}}











\DeclareMathOperator*{\argmin}{argmin}

\def\prox#1{\mathbf{prox}_{#1}}






\newif\ifpagenumbering
\pagenumberingtrue

\pagenumberingfalse

%
%
\newsavebox{\theorembox}
\newsavebox{\lemmabox}
\newsavebox{\defnbox}
\newsavebox{\assbox}
\savebox{\theorembox}{\noindent\bf Theorem}
\savebox{\lemmabox}{\noindent\bf Lemma}
\savebox{\defnbox}{\noindent\bf Definition}
\usepackage[numbers]{natbib}
\usepackage{booktabs, tabularx,multicol, multirow}	
\usepackage{longtable}

\definechangesauthor[color=purple]{ID1}
\definechangesauthor[color=blue]{ID2}

\title{\textbf{Fitting ARMA Time Series Models without Identification: A Proximal Approach}}

\author[ ]{\textbf{Yin Liu}}
\author[ ]{\textbf{Sam Davanloo Tajbakhsh}}

\affil[ ]{Department of Integrated Systems Engineering 
\protect\\ The Ohio State University 
\protect\\
          \texttt{\{liu.6630, davanloo.1\}@osu.edu}}


\date{\today}

\begin{document}

\maketitle

\begin{abstract}
 Fitting autoregressive moving average (ARMA) time series models requires model identification before parameter estimation. Model identification involves determining the order of the autoregressive and moving average components which is generally performed by inspection of the autocorrelation and partial autocorrelation functions or other offline methods. In this work, we regularize the parameter estimation optimization problem with a non-smooth hierarchical sparsity-inducing penalty based on two path graphs that allow performing model identification and parameter estimation simultaneously. A proximal block coordinate descent algorithm is then proposed to solve the underlying optimization problem efficiently. The resulting model satisfies the required stationarity and invertibility conditions for ARMA models. Numerical results supporting the proposed method are also presented.
\end{abstract}

\section{INTRODUCTION}
\label{sec:intro}

ARIMA time series models have a multitude of applications, e.g., in epidemiological surveillance~\citep{zhang2014applications}, water resource management~\citep{wang2015improving}, transportation systems~\citep{billings2006application}, drought forecasting~\citep{han2010drought}, stock price forecasting~\citep{adebiyi2014comparison}, business planning~\citep{calheiros2014workload}, and power systems~\citep{chen2009arima}, to name a few. Even the emergence of deep neural networks and their customized architectures for time series modeling, e.g., Recurrent Neural Nets (RNN) and Long Short-Term Memory (LSTM) has not decreased the popularity of ARIMA models~\citep{makridakis2018statistical}.

Fitting ARMA($p,q$) time series models requires a two-step process: 1. Model identification, 2. Parameter estimation. The model identification step determines the order of the autoregressive (AR) component ($p$) and the moving average (MA) component ($q$). Next, given the underlying ARMA model, the parameters are estimated by solving an optimization problem for the maximum likelihood or least square estimates~\citep{box2015time,del2002statistical}. We should note that ARMA models are used to model stationary processes; however, there exists a more general class of ARIMA models for \emph{homogeneous nonstationary} processes (which are stationary in the mean). Such processes become stationary after $d$ times differencing; hence, the corresponding ARIMA($p,d,q$) model includes differencing of order $d$. The results of this paper are mainly for stationary processes with potential extension to the homogeneous nonstationary processes.  

Model identification is primarily based on visual inspection of the sample autocorrelation function (ACF) and partial autocorrelation (PACF) plots. For the AR($p$) process, the sample ACF follows an exponential decay, and the sample PACF cuts off after lag $p$, while for the MA($q$) process, the sample ACF cuts off after lag $q$ and the sample PACF decays exponentially~\citep{del2002statistical}. When the process involves both AR and MA components, it is more difficult to identify the correct orders. After model identification, parameters are estimated by minimizing a loss function (e.g., negative log-likelihood or least square). Some works, e.g., \citet{box2015time}, recommended an iterative approach between the model identification and parameter estimation which involves inspection of the residuals from the fitted model to make sure that they are indeed white noise. 

In many of today's applications, ARMA models should be fitted to many times series data $\{y_t^j\}_{j=1}^J$ with $J$ being very large, e.g., the demand data for more than thousands of products. If demand is uncorrelated across different products, fitting vector ARMA models is unnecessary, and separate models are preferable. In such scenarios, model identifications become a significant challenge in the modeling process. This work proposes a novel approach to fit ARMA models that allows automating the fitting procedure by merging the model identification step into the parameter estimation. Indeed, with the aid of a single tuning parameter, the proposed algorithm allows data to identify an appropriate model.

\subsection{CONTRIBUTIONS}
The contributions of this work are as follows:
\begin{itemize}
\item We develop a novel approach to fit ARMA time series models that identifies the model by tuning a single continuous parameter ($\lambda$). This approach merges model identification with parameter estimation by introducing a hierarchical sparsity-inducing penalty into the optimization problem. The sparsity-inducing penalty \emph{preserves the hierarchical model structure}, e.g., it does not allow the second AR parameter to be nonzero when the first AR parameter is zero. 
\item We propose an efficient proximal block coordinate descent (BCD) algorithm to solve the underlying nonsmooth and nonconvex optimization problem to a stationary point -- see Algorithm~\ref{alg:master_alg}. The proximal map of the nonsmooth hierarchical sparsity-inducing penalty is shown to be separable on the AR and MA components.
\item The proposed approach automates the ARMA time series modeling, does not require offline model identification and allows ARMA time series modeling for a large number of time series data. 
\end{itemize}

\subsection{Related Work}
Model identification to determine the order of the time series model through regularization with $\ell_1$-norm (also known as Lasso regularization) is performed for univariate AR models in \cite{wang2007regression, nardi2011autoregressive}. Extensions of such methods for vector AR (VAR) models are also considered in \cite{hsu2008subset}. By smart tuning of the regularization parameter, \citet{ren2010subset} proposed an adaptive Lasso regularizer for VAR models with provable asymptotic properties -- see also \cite{chan2011subset} for an adaptive ARMA model selection. This line of research utilizes $\ell_1$-penalty to induce sparsity in the parameters of the time series model to select a subset of the model parameters. However, naive usage of $\ell_1$-penalty results in models that lack the hierarchical structure. Hierarchically structured models are those in which higher-order parameters (in both the AR and MA components) are allowed to be nonzero when lower-order parameters are nonzero (as a necessary condition). This is similar to regression modeling where for better interpretability, one prefers to have higher-order interactions in the model only if the lower-level ones are included in the model.

To keep the benefits and simplicity of fitting ARMA models using Lasso-type penalties and also to enforce the desired hierarchical structure in the identified model, few works looked into hierarchical sparsity-inducing penalties for time series modeling. \citet{nicholson2014high} consider a hierarchical lag structure (HLag) for VAR models utilizing the group lasso with nested groups and use an iterative soft-thresholding algorithm to solve the underlying problem. Furthermore, \citet{wilms2017sparse} consider a vector ARMA model and propose to measure the complexity of the model based on a user-defined strongly convex function that can then be used as a regularizer for model identification. Their parameter estimation is a two-phase process: first, the unobservable errors are estimated by fitting a pure VAR($\infty$) model; next, the approximate lagged errors are used as the covariates for the MA component which results in a least-square problem regularized with $\ell_1$ or HLag penalty.

\subsection{Notations}
Lowercase boldface letters denote vectors, and uppercase Greek letters denote sets, except for $\cB$ which denotes the back-shift operator. The set of all real and complex numbers are denoted by $\mR$ and $\mC$, respectively. Given a set $g\subseteq\cG$, $|g|$ denotes its cardinality and $g^c$ denotes its complement. Given $\b\in\mR^d$ and $g\subseteq\{1,\cdots,d\}$, $\b_g\in\mR^{|g|}$ is a vector with its elements selected from $\b$ over the index set $g$.

\section{PROBLEM DEFINITION}\label{sec:problem_def}
We consider a stationary ARMA($p,q$) time series process with a zero mean as
\begin{equation}\label{eq:arma_1}
\begin{split}
y_t = \phi_1 y_{t-1}+\phi_2 y_{t-2}+\cdots+\phi_p y_{t-p}  - \theta_1\epsilon_{t-1}-\theta_2\epsilon_{t-2}-\cdots-\theta_q\epsilon_{t-q}+\epsilon_t,
\end{split}
\end{equation}
where $\phi_\ell$, with $\ell=1,\dots,p$ are the parameters of the AR component, and $\theta_\ell$, with $\ell=1,\dots,q$ are the parameters of the MA component, and $\epsilon_t$ is a white noise with zero mean and variance $\sigma^2$. The process \eqref{eq:arma_1} can also be written as
\begin{equation}\label{eq:arma_2}
P_\ph^p(\cB) y_t = P_\th^q(\cB)\epsilon_t,
\end{equation}
where $\cB$ is the \emph{back-shift operator}, i.e., $\cB y_t=y_{t-1}$, and 
\begin{equation}
P_{\boldsymbol{\alpha}}^d(z)\triangleq1-\alpha_1z-\alpha_2z^2-\cdots-\alpha_d z^d,
\end{equation}
is a polynomial of degree $d$ with the parameter $\boldsymbol{\alpha}$. The process \eqref{eq:arma_2} is \emph{stationary} if the AR component is stationary which is the case if all roots of the $P_\ph^p(z)$ polynomial are outside the unit circle; furthermore, the process is \emph{invertible} if the MA component is invertible which is the case if all roots of the $P_\th^d(z)$ polynomial are outside the unit circle~\citep{del2002statistical}. Requiring the two polynomials to have roots outside of the unit circle in the $\cB$ space translates to some constraints on $\ph=[\phi_1,\cdots,\phi_p]^\top$ and $\th=[\theta_1,\cdots,\theta_q]^\top$, i.e., $\ph\in\cX_{\ph}^p\subseteq\mR^p$ and $\th\in\cX_\th^q\subseteq\mR^q$, where $\cX_{\boldsymbol{\alpha}}^d$ is defined as
\begin{equation} \label{eq:opt}
\cX_{\boldsymbol{\alpha}}^d \triangleq \{\boldsymbol{\alpha}\in\mR^d:\ \forall z\in\mC, \ P_{\boldsymbol{\alpha}}^d(z)=0 \Rightarrow |z|>1 \}.
\end{equation}
We note that there is also another (maybe more common) representation for $\cX_{\boldsymbol{\alpha}}^d$ based on the monic polynomial
\begin{equation}
\bar{P}_{\boldsymbol{\alpha}}^d (z)\triangleq z^d+\alpha_1z^{d-1}+\cdots+\alpha_{d-1}z+\alpha_d,
\end{equation}
of degree $d$, where it can be shown that
\begin{equation} \label{eq:opt_2}
\cX_{\boldsymbol{\alpha}}^d = \{\boldsymbol{\alpha}\in\mR^d:\ \forall z\in\mC, \ \bar{P}_{\boldsymbol{\alpha}}^d(z)=0 \Rightarrow |z|<1 \}.
\end{equation}
Note that the new representation requires roots of the polynomial to be \emph{inside} the unit circle. For an arbitrary $d$, the geometrical complexity of $\cX_{\boldsymbol{\alpha}}^d$ makes projection onto this set very difficult \citep{combettes1992best}. Indeed, \citet{combettes1992best} discussed that $\cX_{\boldsymbol{\alpha}}^d$ is open, bounded, and \emph{not necessarily convex} --  see also \cite{moses1991determining, blondel2012explicit}. To deal with the openness of $\cX_{\boldsymbol{\alpha}}^d$, it is common to approximate it with a closed set from inside -- see \eqref{eq:apprx_cX}. However, projection onto this set or its approximation \emph{may not be unique} due to its potential nonconvexity. A method for projection onto the $\cX_{\boldsymbol{\alpha}}^d$ set was developed in \cite{moses1991determining}. While their scheme is easy to implement, the convergence of this iterative method is slower than the steepest descent method -- see also \cite{combettes1992best}. To conclude, imposing stationarity and invertibility of the model is performed by projecting $\ph$ and $\th$ onto (inner approximate of) $\cX_\ph^p$ and $\cX_\th^q$, respectively, which may not be unique.  

The above discussion is for an ARMA model that is already identified, i.e., $p$ and $q$ are known. For a model that is not identified, we also need
\begin{align}\label{eq:parent_child}
\begin{split}
 \bullet \quad & \text{if }   \phi_\ell=0 \text{ then } \phi_{\ell'}=0, \ \forall \ell<\ell',  \quad  \text{(or equivalently) }    \text{if } \phi_{\ell'}\neq0 \text{ then } \phi_{\ell}\neq0, \ \forall \ell<\ell', \\
\bullet \quad & \text{if }  \theta_\ell=0 \text{ then } \theta_{\ell'}=0, \ \forall \ell<\ell',   \quad \  \text{(or equivalently) }    \text{if } \theta_{\ell'}\neq0 \text{ then } \theta_{\ell}\neq0, \ \forall \ell<\ell', 
\end{split}
\end{align}
i.e., the sparsity of $\ph$ and $\th$ follow hierarchical structures.

Before discussing how these sparsity structures are enforced, we will briefly discuss the loss function for fitting ARMA models. Given an identified model, i.e., $p$ and $q$ are known, fitting ARMA models are generally performed by finding the conditional maximum likelihood or conditional least-square estimates, which are close to each other assuming that $\epsilon_t$ in \eqref{eq:arma_1} follows a Normal distribution and the data size $T$ is reasonably large. The conditional least-square estimate (for an identified model) requires solving
{\allowdisplaybreaks[0]
\begin{equation} \label{eq:cond_LS}
\begin{aligned}
\min_{\ph,\th}  \cL(\ph,\th) & = \frac{1}{2}\hspace{-0.4cm}\sum_{t=\max\{p,q\}}^T \hspace{-0.4cm}\hat{\epsilon}_t^2 = \frac{1}{2} \hspace{-0.4cm}\sum_{t=\max\{p,q\}}^T\hspace{-0.3cm}\left(y_t-\hat{y}_{t|t-1}(\ph,\th)\right)^2 \\                 \text{s.t.} & \quad \ph\in\cX_\ph^p, \quad \th\in\cX_\th^q, 
\end{aligned}
\end{equation}}
where $\hat{y}_{t|t-1}(\ph,\th)$ is the model prediction for $y_t$ using the data $\{y_t\}_{t=1}^{t-1}$, and is called \emph{conditional} since it depends on the $p$ initial values for $y_t$ and $q$ initial values for $\epsilon_t$. Note that in the absence of MA terms (i.e., $q=0$), the objective function of \eqref{eq:cond_LS} is convex in the parameters of the AR model $\ph$. However, if $q>0$ then the objective function of \eqref{eq:cond_LS} is also nonconvex, and optimization routines are not guaranteed to converge to the global optimum~\citep{hamilton1994time,box2015time,benidir1990nonconvexity,georgiou2008convex}. To sum up, in its most general case, problem \eqref{eq:cond_LS} involves nonconvex minimization over a nonconvex set and, hence, it is difficult to solve.

This paper intends to provide a solution that preserves the hierarchical sparsity structure and is \emph{not} concerned with the nonconvexities of the objective function and the feasible region. In the next section, we propose a method that allows learning $p$ and $q$ within the parameter estimation step.

\section{PROPOSED METHOD} \label{sec:prop_method}
Before discussing the proposed method, we should briefly discuss the notion of \emph{hierarchical sparsity}. Let $D=(\cS,\cE)$ be a Directed Acyclic Graph (DAG) where $\cS=\{s_1,\cdots,s_n\}$ is the set of graph nodes and $\cE$ be the set of ordered pair of nodes denoting edges where each pair denotes an edge from the node in the first element to the node in the second element. Each $s_i$ is an index set of the parameters of the model such that $s_i\cap s_j=\varnothing,\ \forall (i,j)$ and $\cup_{i=1}^n s_i=\{1,\cdots,d\}$ where $d$ is the number of parameters. DAG shows the sparsity structures of interest in the parent/child relationship. Assuming one variable per node, the variable in a child node can only be nonzero if the variable in the parent node is nonzero. For instance, given a parameter $\b\in\mR^3$, the top plot in Figure~\ref{fig:simple_path_graphs} requires $\beta_1\neq 0$ if $\beta_2\neq 0$ ($\beta_2=0$ if $\beta_1=0$); similarly, $\beta_2\neq 0$ if $\beta_3\neq 0$ ($\beta_3=0$ if $\beta_2=0$). 
For a DAG that contains more than one variable per node (e.g. the bottom plot in Figure~\ref{fig:simple_path_graphs}), two different hierarchies can be considered: 1. {\it Strong hierarchy}: the parameters in the child node can only be nonzero if \emph{all} of the parameters in its parent node(s) are nonzero. 2. {\it Weak hierarchy}: the parameters in the child node can be nonzero if \emph{at least one} of the parameters in its parent node(s) is nonzero~\citep{bien2013lasso}. For more information about hierarchical sparsity structures refer to \cite{zhao2009composite,jenatton2011proximal,jenatton2011structured,bach2012structured,yan2017hierarchical}.

\begin{figure}[hbt]
  \centering
  \includegraphics[scale=0.5]{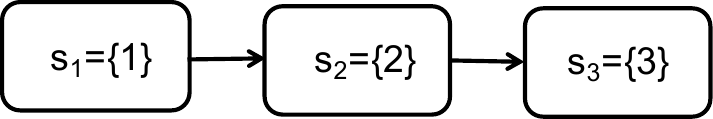}
  
  \hfill
  
   \includegraphics[scale=0.5]{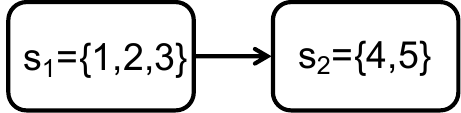}
  \caption{\footnotesize Path graphs showing hierarchical sparsities: \textbf{(Top)} A graph with a variable per node for $\b\in\mR^3$. \textbf{(Bottom)} A graph with multiple variables per node for $\b\in\mR^5$.}
    \label{fig:simple_path_graphs}
\end{figure}

\subsection{Hierarchical Sparsity for ARMA Models} \label{sec:hierarchical_arma}
In this work, we want to include the model identification of ARMA models in the parameter estimation step. We assume the knowledge about some upper bounds on the true $p^*$ and $q^*$, i.e., $\bar{p}\geq p^*$ and $\bar{q}\geq q^*$, respectively. Considering $\text{ARMA}(\bar{p},\bar{q})$, the estimated parameters should satisfy the condition \eqref{eq:parent_child}. To do so, we define two path graphs as shown in Figure~\ref{fig:arma_DAG}. Since this DAG consists of two path graphs and there is only one variable in each node, weak and strong hierarchies are equivalent. 
\begin{figure}[hbt]
  \centering
  \includegraphics[scale=0.6]{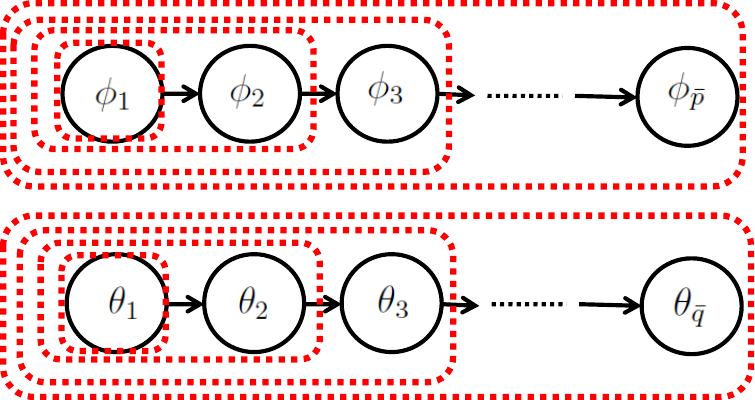}
  \caption{\footnotesize DAG for the $\text{ARMA}(\bar{p},\bar{q})$ process. The red dotted rectangles illustrate the ascending grouping scheme for the LOG penalty. }
    \label{fig:arma_DAG}
\end{figure}
Enforcing the sparsity structure shown in Figure~\ref{fig:arma_DAG} \emph{exactly} requires introducing binary variables into the optimization problem \eqref{eq:cond_LS} and solving a Mixed Integer Program (MIP). For instance, to model the parent/child hierarchy between $\phi_1$ and $\phi_2$, one needs to introduce a binary variable $z\in\{0,1\}$ and two constraints as $z\epsilon\leq|\phi_1|$ and $|\phi_2|\leq z\mu$ for some reasonably small and large parameters $\epsilon$ and $\mu$, respectively. Provided that the underlying optimization problem is already very difficult to solve, introducing $\bar{p}+\bar{q}-2$ binary variable makes the problem even more challenging. Hence, despite the significant recent advances in MIP algorithms (see e.g., \cite{manzour2019integer,bertsimas2016best,mazumder2017thediscrete,bertsimas2017sparse}), we use a convex nonsmooth regularizer that induces hierarchical sparsity structures of interest.

\subsection{Latent Overlapping Group (LOG) Lasso}
The hierarchical sparsity structure shown in Figure~\ref{fig:arma_DAG} is induced by regularizing the objective function in \eqref{eq:cond_LS} by the LOG penalty -- see \cite{jacob2009group}.  Let $\b\triangleq[\ph,\th]\in\mR^{(\bar{p}+\bar{q})}$ denote all of the ARMA parameters. The LOG penalty function is defined as
\begin{equation}\label{eq:LOG} 
\Omega_{\text{LOG}}(\b)= \inf_{\n^{(g)},\ g\in\cG} \left\{\sum_{g\in\cG}w_g\norm{\n^{(g)}}_2 \ \ \text{s.t.} \ \sum_{g\in\cG}\n^{(g)}=\b,\ \n^{(g)}_{g^c}=0 \right\},
\end{equation} 
where  \begin{align*}
\cG=\Big\{ & \{1\}, \{1,2\},\cdots,\{1,\cdots,\bar{p}\},\{\bar{p}+1\}, \\ 
 &\{\bar{p}+1,\bar{p}+2\}, \cdots,\{\bar{p}+1,\cdots,\bar{p}+\bar{q}\} \Big\},
\end{align*} $g\in\cG$ is itself a set, $\n^{(g)}\in\mR^{(\bar{p}+\bar{q})}$ is a latent vector indexed by $g$, and $w_g$ is the weight for the group $g$. $\n^{(g)}_{g^c}$ selects the elements of $\n^{(g)}$ based on the index $g^c$. The groups inside $\cG$ are shown with the red dotted rectangles in Figure~\ref{fig:arma_DAG}, i.e., for each node, there is a group containing this node and all of its ascendants.


It is known that $\ell_2$-norm induces block sparsity; hence, the LOG penalty tries to find block sparse combinations of the latent variables that sum up to $\b$~\citep{jacob2009group,yan2017hierarchical}. For instance, for an ARMA model with $\bar{p}=2$ and $\bar{q}=2$, $\cG=\{\{1\},\{1,2\},\{3\},\{3,4\}\}$, the objective of the infimum is $|\n^{\{1\}}_1|+\norm{[\n^{\{1,2\}}_1,\n^{\{1,2\}}_2]}+|\n^{\{3\}}_3|+\norm{[\n^{\{3,4\}}_3,\n^{\{3,4\}}_4]}$ (where for simplicity $w_g=1,\ \forall g\in\cG$) and the constraints are 
\begin{equation*}
\begin{bmatrix}
\n^{\{1\}}_1   \\
0                 \\
0 		   \\
0 		 
\end{bmatrix}
+
\begin{bmatrix}
\n^{\{1,2\}}_1  \\
\n^{\{1,2\}}_2  \\
0 		      \\
0 		      
\end{bmatrix}
+
\begin{bmatrix}
0 		  \\
0 		    \\
\n^{\{3\}}_3  \\
0		 
\end{bmatrix}
+
\begin{bmatrix}
0 			 \\
0 			  \\
 \n^{\{3,4\}}_3       \\
\n^{\{3,4\}}_4
\end{bmatrix}
=
\begin{bmatrix}
\phi_1	\\
\phi_2	 \\
\theta_1      \\
\theta_2
\end{bmatrix}.
\end{equation*}

\subsection{The Proposed Hierarchically Sparse Learning Problem}
The proposed Hierarchically Sparse (HS) learning problem is
\begin{equation} \tag{HS-ARMA}
	\label{eq:prop_opt}
\begin{split}
\min_{\ph,\th} \quad &\cL(\ph,\th) +\lambda\Omega_{\text{LOG}}(\ph,\th)	\\		
	\text{s.t.} & \quad \ph\in\cX_\ph^p, \quad \th\in\cX_\th^q,
\end{split}
\end{equation}
where $\lambda>0$ is a tuning parameter, $\cX_\ph^p$ and $\cX_\th^q$ are defined in \eqref{eq:opt}, and $\Omega_{\text{LOG}}(\cdot)$ is defined in \eqref{eq:LOG}. $\lambda$ controls the tradeoff between the loss and penalty functions and, hence, allows model identification and parameter estimation simultaneously. Increasing $\lambda$ results in sparser models where the resulted nested model satisfies the hierarchical sparsity structure shown in Figure~\ref{fig:arma_DAG}. As discussed in  Section~\ref{sec:hierarchical_arma}, $\bar{p}$ and $\bar{q}$ are some upper bounds on the true $p^*$ and $q^*$ and are known a priori. 

Given the convex nonsmooth function $\Omega_{\text{LOG}}(\cdot)$, we propose to solve \eqref{eq:prop_opt} using a proximal method~\citep{nesterov2013gradient,beck2009fast,parikh2014proximal}. Similar to gradient methods which require iterative evaluation of the gradient, proximal methods require iterative evaluation of the proximal operator. Proximal operator of the $\Omega_{\text{LOG}}(\bb)$ at $\bb\in\mR^{(\bar{p}+\bar{q})}$ is defined as
\begin{equation}\label{eq:prox_def}
\prox{\lambda\Omega_{\text{LOG}}}(\bb)\triangleq \argmin_{\b\in\mR^{(\bar{p}+\bar{q})}}\left\{\lambda\Omega_{\text{LOG}}(\b)+\frac{1}{2}\norm{\b-\bb}_2^2\right\}.
\end{equation}
\cite{zhang2020firstorder} developed a two-block alternating direction method of multiplier (ADMM) with a sharing scheme \citep{boyd2011distributed} to solve \eqref{eq:prox_def} -- see Algorithm~\ref{alg:sharing_Prox_LOG}. The proposed algorithm can be parallelized over all groups in $\cG$ in the update of the first block; furthermore, it converges \emph{linearly} -- see \cite{zhang2020firstorder} for more details.

\begin{algorithm}[htbp]\small
\caption{\small Evaluating $\prox{\lambda\Omega_{\text{LOG}}}(\bb)$}
\label{alg:sharing_Prox_LOG}
\begin{algorithmic}[1]
\REQUIRE $\bb, \lambda, \alpha, w_g \ \forall g\in\cG$
\STATE $k=0,\ U^0_{.g}=\bo,\ X^{2,0}_{.g}=\bo~\forall g\in\cG$
\WHILE{\text{stopping criterion not met}}
\STATE $k\gets k+1$ \\
\STATE $X_{gg}^{1,k+1} \gets \prox{\lambda w_g\norm{\cdot}_2}(X^{2,k}_{gg}-U^k_{gg}), \ \ \forall g\in\cG$  \\
\STATE $X_{g^cg}^{1,k+1} \gets \bo, \ \ \forall g\in\cG$
\STATE $\bar{\bx}^{2,k+1}\gets\frac{1}{|\cG|+\rho}\Big(\bb+\frac{\rho}{|\cG|}\sum_{g\in\cG}(X^{1,k+1}_{.g}+U^k_{.g})\Big)$ \\
\STATE  $X_{.g}^{2,k+1}\gets \bar{\bx}^{2,k+1}+X^{1,k+1}_{.g}+U^k_{.g}-(1/|\cG|)\sum_{g\in\cG}(X^{1,k+1}_{.g}+U^k_{.g}),\ \forall g\in\cG$ \\
\STATE $U_{.g}^{k+1}=U_{.g}^{k}+(\alpha/\rho)\big(\frac{1}{|\cG|}\sum_{g\in\cG}(X^{1,k+1}_{.g}+U^k_{.g})-\bar{\bx}^2\big), \ \forall g\in\cG.$ \\
\ENDWHILE\\
\STATE $\b=\sum_{g\in\cG} X^{1,k+1}_{.g}$ \\
\hspace{-0.7cm } \textbf{Output:} $\b$
\end{algorithmic}
\end{algorithm}

Let $\Omega_{\text{LOG}}^{\text{AR}}$  and $\Omega_{\text{LOG}}^{\text{MA}}$ be the LOG penalties for the pure AR, i.e, ARMA$(\bar{p},0)$, and pure MA, i.e, ARMA$(0,\bar{q})$, models, respectively. In Lemma~\ref{lem:separable_prox} below, we show that the proximal operator of $\Omega_{\text{LOG}}$ is separable over $\ph$ and $\th$.

\begin{lemma}\label{lem:separable_prox}
The proximal operator of the LOG penalty defined over the ARMA DAG is separable, i.e., $\prox{\Omega_{\text{LOG}}}(\bb_1,\bb_2)=( \prox{\Omega_{\text{LOG}}^{\text{AR}}}(\bb_1),\prox{\Omega_{\text{LOG}}^{\text{MA}}}(\bb_2) )$.
\end{lemma}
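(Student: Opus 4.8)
The plan is to exploit the block structure of the group collection $\cG$ directly in the variational definition \eqref{eq:LOG} of $\Omega_{\text{LOG}}$. The crucial observation is that every group $g\in\cG$ is entirely contained in the AR index block $\{1,\dots,\bar p\}$ or entirely in the MA index block $\{\bar p+1,\dots,\bar p+\bar q\}$; no group straddles the two path graphs. Consequently, writing the optimization variables of the proximal problem \eqref{eq:prox_def} as $\b=(\ph,\th)$ with $\ph\in\mR^{\bar p}$ and $\th\in\mR^{\bar q}$, the penalty itself splits: $\Omega_{\text{LOG}}(\ph,\th)=\Omega_{\text{LOG}}^{\text{AR}}(\ph)+\Omega_{\text{LOG}}^{\text{MA}}(\th)$. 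First I would establish this additive decomposition of the penalty. Partition $\cG=\cG_{\text{AR}}\cup\cG_{\text{MA}}$ accordingly. In the infimum \eqref{eq:LOG}, for any group $g\in\cG_{\text{AR}}$ the constraint $\n^{(g)}_{g^c}=0$ forces all MA coordinates of $\n^{(g)}$ to vanish (since the MA block lies in $g^c$), and symmetrically for $g\in\cG_{\text{MA}}$. Hence the constraint $\sum_g \n^{(g)}=\b$ decouples coordinatewise into $\sum_{g\in\cG_{\text{AR}}}\n^{(g)}=(\ph,\bo)$ and $\sum_{g\in\cG_{\text{MA}}}\n^{(g)}=(\bo,\th)$, while the objective $\sum_g w_g\|\n^{(g)}\|_2$ splits along the same partition. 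Taking the infimum over the two disjoint sets of latent variables independently yields $\Omega_{\text{LOG}}(\ph,\th)=\Omega_{\text{LOG}}^{\text{AR}}(\ph)+\Omega_{\text{LOG}}^{\text{MA}}(\th)$, where each summand is exactly the LOG penalty built on the corresponding single path graph.

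Next I would plug this into the proximal problem. With $\bb=(\bb_1,\bb_2)$, the objective in \eqref{eq:prox_def} becomes
\begin{equation*}
  \lambda\Omega_{\text{LOG}}^{\text{AR}}(\ph)+\tfrac12\|\ph-\bb_1\|_2^2 \;+\; \lambda\Omega_{\text{LOG}}^{\text{MA}}(\th)+\tfrac12\|\th-\bb_2\|_2^2,
\end{equation*}
which is a sum of a function of $\ph$ only and a function of $\th$ only. A separable objective is minimized by minimizing each block independently, so the $\argmin$ factorizes as $\big(\prox{\lambda\Omega_{\text{LOG}}^{\text{AR}}}(\bb_1),\ \prox{\lambda\Omega_{\text{LOG}}^{\text{MA}}}(\bb_2)\big)$, which is the claim. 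I should note that $\Omega_{\text{LOG}}$ is convex (as the infimal projection of a norm over a linear constraint set), so each proximal map is well defined and single-valued, and the factorization is an equality of singletons rather than merely an inclusion of sets.

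The only genuinely delicate point — the ``main obstacle'' — is justifying that the infimum in \eqref{eq:LOG} really separates, i.e., that an optimal (or infimizing) choice of latent variables never needs a group $g$ to carry mass outside its own block; this is immediate here only because the support constraint $\n^{(g)}_{g^c}=0$ is built into the definition and every $g$ is block-contained, so there is nothing to optimize across blocks. I would state this as the key lemma-internal claim and give the one-line argument above; everything else is bookkeeping. A minor care point is to keep track of the weights $w_g$ so that the induced AR and MA penalties are exactly the ones denoted $\Omega_{\text{LOG}}^{\text{AR}}$ and $\Omega_{\text{LOG}}^{\text{MA}}$ (which use the same weights restricted to $\cG_{\text{AR}}$ and $\cG_{\text{MA}}$), but this is automatic from the partition.
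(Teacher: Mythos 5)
Your proposal is correct and follows essentially the same route as the paper: both arguments observe that no group in $\cG$ straddles the AR and MA index blocks, deduce the additive decomposition $\Omega_{\text{LOG}}(\ph,\th)=\Omega_{\text{LOG}}^{\text{AR}}(\ph)+\Omega_{\text{LOG}}^{\text{MA}}(\th)$ from the separability of the infimum in \eqref{eq:LOG}, and conclude via the separable sum property of the proximal operator. Your write-up simply spells out more explicitly the role of the support constraint $\n^{(g)}_{g^c}=0$ and the decoupling of $\sum_g\n^{(g)}=\b$, which the paper leaves implicit.
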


\begin{proof}
With a slight abuse of notation, let $\cG^{\text{AR}}$ be the set of groups for $\Omega_{\text{LOG}}^{\text{AR}}$ such that $\sum_{g\in\cG^{\text{AR}}}\n^{(g)} = \ph$ (the top path graph in Figure~\ref{fig:arma_DAG}). Similarly,  let $\cG^{\text{MA}}$ be the set of groups for $\Omega_{\text{LOG}}^{\text{MA}}$ such that $\sum_{g\in\cG^{\text{MA}}}\boldsymbol{\omega}^{(g)} = \th$. Given that the objective of the infimum in the definition of $\Omega_{\text{LOG}}$ for the ARMA DAG is separable in $\cG^{\text{AR}}$ and $\cG^{\text{MA}}$, we have $\Omega_{\text{LOG}}(\ph,\th)=\Omega_{\text{LOG}}^{\text{AR}}(\ph)+\Omega_{\text{LOG}}^{\text{MA}}(\th)$. Hence, the result follows from the separable sum property of the proximal operator.
\end{proof}

Indeed, in Algorithm~\ref{alg:master_alg}, the proximal operator of LOG is not evaluated in one step while the algorithm evaluates $\prox{\lambda\Omega_{\text{LOG}}^{\text{AR}}}$ and $\prox{\lambda\Omega_{\text{LOG}}^{\text{MA}}}$ sequentially in a Gauss-Seidel manner.

\begin{algorithm}[htbp]\small 
\caption{\small Proximal BCD to solve \eqref{eq:prop_opt}}
\label{alg:master_alg}
\begin{algorithmic}[1]
\REQUIRE $\lambda, \bar{p}, \bar{q}, \ph_0\in\cX_{\ph}^{\bar{p}}, \th_0\in\cX_{\th}^{\bar{q}}$
\STATE $k=1$  \\
\WHILE{\text{stopping criterion not met}}
\STATE $\ph^{k+1/2} \gets \prox{\lambda\Omega_{\text{LOG}}^{\text{AR}}}(\ph^k-\gamma_k\grad_{\ph}\cL(\ph^k,\th^k))$ \quad \text{(prox is calculated by Algorithm~\ref{alg:sharing_Prox_LOG})}\\
\STATE $p \gets \text{card}(\{i: \ph^{k+1/2}_i\neq 0\})$ \\
\STATE $\ph^{k+1} \gets \text{Proj}_{\tilde{\cX}_\ph^p}(\ph^{k+1/2})$ \\
\STATE $\th^{k+1/2} \gets \prox{\lambda\Omega_{\text{LOG}}^{\text{MA}}}(\th^k-\gamma_k\grad_{\th}\cL(\ph^{k+1},\th^k))$ \quad \text{(prox is calculated by Algorithm~\ref{alg:sharing_Prox_LOG})} \\
\STATE $q \gets \text{card}(\{i: \th^{k+1/2}_i\neq 0\})$\\
\STATE $\th^{k+1} \gets \text{Proj}_{\tilde{\cX}_\th^q}(\th^{k+1/2})$ \\
\STATE $k\gets k+1$ \\
\ENDWHILE\\
\hspace{-0.7cm } \textbf{Output:} $(\ph_k,\th_k)$
\end{algorithmic}
\end{algorithm}

The algorithm to solve problem \eqref{eq:prop_opt} is a two-block proximal block coordinate descent (BCD) with projection, shown in Algorithm~\ref{alg:master_alg}.
From \eqref{eq:arma_1}, since $\epsilon_t=y_t-\ph^\top\by_{t-p}^{t-1}-\th^\top\e_{t-q}^{t-1}$ where $\by_{t-p}^{t-1}=[y_{t-1},\cdots,y_{t-p}]$ and $\e_{t-q}^{t-1}=[\epsilon_{t-1},\cdots,\epsilon_{t-q}]$, we have
\begin{subequations}
\begin{align}
\grad_{\ph} \cL(\ph,\th) &= -\hspace{-0.4cm}\sum_{t=\max\{\bar{p},\bar{q}\}}^T \hspace{-0.4cm} (y_t-\ph^\top\by_{t-p}^{t-1}-\th^\top\e_{t-q}^{t-1})\by_{t-p}^{t-1}, \label{eq:grad_wrt_phi} \\
\grad_{\th} \cL(\ph,\th) &= -\hspace{-0.4cm}\sum_{t=\max\{\bar{p},\bar{q}\}}^T (y_t-\ph^\top\by_{t-p}^{t-1}-\th^\top\e_{t-q}^{t-1})\e_{t-q}^{t-1}. \label{eq:grad_wrt_theta}
\end{align}
\end{subequations}
The gradient updates are passed to proximal operators as arguments which are indeed proximal gradient steps~\citep{beck2009fast,parikh2014proximal}. Note that the solution of the proximal operators is sparse vectors that conform to the hierarchical sparsity of Figure~\ref{fig:arma_DAG}.  

The solutions of the proximal gradient steps for the AR and MA components, i.e., $\ph^{k+1/2}$ and $\th^{k+1/2}$ are not necessarily stationary or invertible, respectively. The stationarity and invertibility of AR and MA are regained by projection on $\cX_{\ph}^p$ and $\cX_{\th}^q$ where $p$ and $q$ are the order of AR and MA components from the proximal steps. For the projection, we use the second definition of $\cX_{\boldsymbol{\alpha}}^d$ in \eqref{eq:opt_2}. Since $\cX_{\boldsymbol{\alpha}}^d$ is an open set, following \cite{combettes1992best}, we find its approximation with a closed set from inside as
\begin{equation}\label{eq:apprx_cX}
\begin{split}
\tilde{\cX}_{\boldsymbol{\alpha}}^d(\delta) \triangleq & \{\boldsymbol{\alpha}\in\mR^d:\ \forall z\in\mC, \ \bar{P}_{\boldsymbol{\alpha}}^d(z)=0 \\
& \quad \Rightarrow -1+\delta\leq z \leq1-\delta \},
\end{split}
\end{equation}
where $\delta>0$ determines the approximation gap. Euclidean projection on $\tilde{\cX}_{\ph}^p(\delta)$ and $\tilde{\cX}_{\th}^q(\delta)$ sets guarantee stationarity and invertibility of $\ph^{t+1}$ and $\th^{t+1}$, respectively. Note that these projections do \emph{not} change the sparsity of the parameters.

Finally, note that $\epsilon_t$ in the objective of \eqref{eq:prop_opt} is calculated based on ARMA$(\bar{p},\bar{q})$. Hence, while the iterates $\ph^{t+1}$ and $\th^{t+1}$ are feasible with respect to $\cX_{\ph}^p$ and $\cX_{\th}^q$, respectively, we need to show $(\ph^{k+1},\th^{k+1})\in\cX_\ph^{\bar{p}}\times\cX_\th^{\bar{q}}$. This is established in Lemma~\ref{lem:nested} below.

\begin{lemma}\label{lem:nested}
For any $d\in\{1,2,...\}$, we have $\cX_{\boldsymbol{\alpha}}^d\subseteq\cX_{\boldsymbol{\alpha}}^{d+1}$. 
\end{lemma}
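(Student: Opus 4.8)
The plan is to first unwind the (deliberate) abuse of notation in the statement. The inclusion $\cX_{\boldsymbol{\alpha}}^d\subseteq\cX_{\boldsymbol{\alpha}}^{d+1}$ is to be read through the canonical embedding $\mR^d\hookrightarrow\mR^{d+1}$ that appends a zero, $\boldsymbol{\alpha}=(\alpha_1,\dots,\alpha_d)\mapsto(\boldsymbol{\alpha},0)=(\alpha_1,\dots,\alpha_d,0)$. So it suffices to fix an arbitrary $\boldsymbol{\alpha}\in\cX_{\boldsymbol{\alpha}}^d$ and show that the zero-padded vector $(\boldsymbol{\alpha},0)$ belongs to $\cX_{\boldsymbol{\alpha}}^{d+1}$, i.e., that appending a zero coefficient preserves the root-location condition defining these sets.

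I would carry this out with the first representation \eqref{eq:opt}. The key (and essentially trivial) observation is that appending a zero coefficient does not change the polynomial at all: for every $z\in\mC$,
\[
  P_{(\boldsymbol{\alpha},0)}^{d+1}(z)=1-\alpha_1 z-\cdots-\alpha_d z^d-0\cdot z^{d+1}=P_{\boldsymbol{\alpha}}^d(z).
\]
Hence $P_{(\boldsymbol{\alpha},0)}^{d+1}$ and $P_{\boldsymbol{\alpha}}^d$ have exactly the same set of zeros in $\mC$; since $\boldsymbol{\alpha}\in\cX_{\boldsymbol{\alpha}}^d$ forces every zero of $P_{\boldsymbol{\alpha}}^d$ to satisfy $|z|>1$, the same holds for $P_{(\boldsymbol{\alpha},0)}^{d+1}$, so $(\boldsymbol{\alpha},0)\in\cX_{\boldsymbol{\alpha}}^{d+1}$ by definition \eqref{eq:opt}. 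I would point out that this argument needs no case distinction on whether $\alpha_d=0$, because \eqref{eq:opt} is phrased as an implication quantified over all $z\in\mC$ and is insensitive to the actual degree of the polynomial.

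As an equivalent sanity check I would also note the monic form \eqref{eq:opt_2}: there $\bar P_{(\boldsymbol{\alpha},0)}^{d+1}(z)=z^{d+1}+\alpha_1 z^{d}+\cdots+\alpha_d z+0= z\,\bar P_{\boldsymbol{\alpha}}^d(z)$, so the zeros of $\bar P_{(\boldsymbol{\alpha},0)}^{d+1}$ are precisely those of $\bar P_{\boldsymbol{\alpha}}^d$ together with one extra simple zero at $z=0$, all of which lie strictly inside the unit circle. Honestly there is no real obstacle here — the statement is immediate once the embedding is made explicit — so the only thing to be careful about is the bookkeeping of the zero-padding and the remark that $P_{\boldsymbol{\alpha}}^d$ (resp.\ $\bar P_{\boldsymbol{\alpha}}^d$ up to the harmless factor $z$) is unchanged. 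I would close by observing that iterating the inclusion yields $\cX_{\boldsymbol{\alpha}}^d\subseteq\cX_{\boldsymbol{\alpha}}^{d'}$ for all $d'\ge d$, which is exactly what is invoked after Algorithm~\ref{alg:master_alg} to deduce $(\ph^{k+1},\th^{k+1})\in\cX_\ph^{\bar{p}}\times\cX_\th^{\bar{q}}$ from feasibility with respect to the smaller sets $\cX_\ph^{p}$ and $\cX_\th^{q}$.
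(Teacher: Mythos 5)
Your proof is correct and follows the same route as the paper's one-line argument: the paper's proof is precisely the observation that $\boldsymbol{\alpha}\in\cX_{\boldsymbol{\alpha}}^d$ implies $[\boldsymbol{\alpha},0]\in\cX_{\boldsymbol{\alpha}}^{d+1}$, which you flesh out by noting $P_{(\boldsymbol{\alpha},0)}^{d+1}\equiv P_{\boldsymbol{\alpha}}^d$ so the root set is unchanged. The monic-form sanity check (the extra root at $z=0$ lying inside the unit circle) is a nice addition but not needed.
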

\vspace{-0.3cm}
\begin{proof}
Proof follows from the definition of $\cX_{\boldsymbol{\alpha}}^d$ in \eqref{eq:opt}, and that if $\boldsymbol{\alpha}\in\cX_{\boldsymbol{\alpha}}^d$ then $[\boldsymbol{\alpha},0]\in\cX_{\boldsymbol{\alpha}}^{d+1}$.
\end{proof}
Therefore, $\{\cX_{\boldsymbol{\alpha}}^d\}_{d=1}^{\bar{d}}$ is a sequence of nested sets as $\cX_{\boldsymbol{\alpha}}^1\subseteq\cdots\subseteq\cX_{\boldsymbol{\alpha}}^{\bar{d}}$. However, the reverse is not true, i.e., $\boldsymbol{\alpha}\in\cX^d_{\boldsymbol{\alpha}}$ is \emph{not sufficient} for $[\alpha_1,\cdots,\alpha_{d-1}]\in\cX^{d-1}_{\boldsymbol{\alpha}}$, which can be shown by counter examples.

\subsection{A Note on the Optimization Problem (HS-ARMA)}\label{sec:note}
Problem \eqref{eq:prop_opt} requires nonconvex and non-smooth optimization over a nonconvex set. To be specific, if $q=0$ the loss function is convex in $\ph$; otherwise, $\cL(\ph,\th)$ is nonconvex in \emph{both} $\ph$ and $\th$. Indeed, when $q>0$ the objective function is a \emph{polynomial} function of degree $T-\max\{p,q\}$. The $\Omega_{\text{LOG}}(\ph,\th)$ penalty is jointly convex but nonsmooth unction. Finally, $\cX_{\ph}^p$ and $\cX_{\th}^q$ are open nonconvex sets and their approximations $\tilde{\cX}_{\ph}^p$  and $\tilde{\cX}_{\th}^q$ (defined in \eqref{eq:apprx_cX}) are closed but still nonconvex sets. 

To deal with nonconvexities of $\tilde{\cX}_{\ph}^p$  and $\tilde{\cX}_{\th}^q$, one may try to approximate them with some inscribed convex sets which require generalizations of the \emph{potato peeling problem} \citep{goodman1981largest} and the algorithm in \cite{chang1986polynomial} to non-polygon geometries -- see also \cite{cabello2017peeling}. Note that optimization over the convex hulls of these sets may result in nonstationary or noninvertible solutions. 

Under some convex approximations of the sets $\tilde{\cX}_{\ph}^p$  and $\tilde{\cX}_{\th}^q$, the problem under investigation is a nonconvex nonsmooth optimization over a convex set. For such a setting, algorithms are settled with finding solutions that satisfy some necessary optimality conditions, e.g., stationary solutions which are those that lack a feasible descent direction. To the best of our knowledge, the only study that provides a method that converges to stationary points in this setting is \cite{razaviyayn2013unified}, which involves iterative minimization of a consistent majorizer of the objective function over the feasible set.

\section{NUMERICAL STUDIES}
\label{sec:numerical}
The corresponding code is provided in \url{https://github.com/Yin-LIU/ARMA_identify_proximal}.

\subsection{Synthetic Data Generation Process}\label{sec:data_gen}
To generate a stationary and an invertible ARMA($p^*,q^*$) model,  we first generate $p^*+q^*$ numbers uniformly at random on $[-1,-0.1]\cup [0.1,1]$ for all parameters. The samples are then rejected if the stationary and invertibility conditions, based on \eqref{eq:opt_2}, are not satisfied. Given an accepted sampled parameter $(\ph^{*,i},\th^{*,i})$, a realization of the time series with length $T=4000$ is simulated with a zero mean and variance equal to one.

\subsection{Model Identification and Parameter Estimation Accuracy}

\begin{table*}[t]
 \caption{ The mean (standard deviation) of HS-ARMA estimation errors. Boldface numbers are the minimum mean error for each model (row). Parameter estimates are obtained by the proximal BCD Algorithm~\ref{alg:master_alg}.}
  \centering
  \resizebox{0.8\textwidth}{!}{
    \begin{tabular}{ccccccc}
      \toprule
      \multicolumn{1}{c}{\multirow{2}[0]{*}{($p^*, q^*$)}} & \multicolumn{6}{c}{$\lambda_0$}                                                                                                                          \\
                                                       & \multicolumn{1}{c}{0.5}         & \multicolumn{1}{c}{1} & \multicolumn{1}{c}{2} & \multicolumn{1}{c}{3} & \multicolumn{1}{c}{5} & \multicolumn{1}{c}{10} \\
      \cmidrule(r){2-7}
      (3,2)                                            & 0.62 (0.350)                     & \textbf{0.38} (0.392)  & 0.54 (0.451)           & 0.59 (0.438)           & 0.63 (0.394)           & 0.77 (0.338)            \\
      (3,3)                                            & \textbf{0.64} (0.494)            & 0.74 (0.518)           & 0.85 (0.491)           & 0.92 (0.486)           & 1.05 (0.393)           & 1.05 (0.355)            \\
      (2,6)                                            & 0.79 (0.326)                     & 0.58 (0.324)           & \textbf{0.46} (0.339)  & 0.64 (0.368)           & 0.92 (0.390)           & 1.04(0.435)            \\
      (6,6)                                            & \textbf{0.69} (0.414)            & 0.79 (0.518)           & 1.10 (0.477)           & 1.25 (0.468)           & 1.32 (0.420)           & 1.29 (0.344)            \\
      (8,5)                                            & \textbf{0.87} (0.307)            & 0.99 (0.426)           & 1.22 (0.492)           & 1.41 (0.586)           & 1.57 (0.492)           & 1.48 (0.502)            \\
      \bottomrule
    \end{tabular}
  }
  \label{tab:average_error}
\end{table*}

To evaluate the estimation error of the proposed method, we simulate $n=20$ realizations of ARMA models with orders $(p^*,q^*)$ such that $p^*\leq \bar{p}=10$ and $q^*\leq\bar{q}=10$ following our discussion in Section~\ref{sec:data_gen}. The tuning parameter of the $\Omega_{\text{LOG}}$ penalty is set as $\lambda = \lambda_0 \sqrt{T}$ with $\lambda_0 \in \{0.5,1,2,3,5,10\}$ and $w_g$ in its definition is set to $|g|^{1/2}$. The estimation error is calculated as $ \epsilon_{\lambda_0} = \|(\hat{\ph}_{\lambda_0},\hat{\th}_{\lambda_0})-(\ph^{*},\th^{*}) \|_2$, where $(\ph^*,\th^*)$ are the true and $(\hat{\ph},\hat{\th})$ are the estimated parameters based on Algorithm~\ref{alg:master_alg}. Table \ref{tab:average_error} reports the mean and standard deviation of the estimation errors for different $\lambda_0$ values.

To provide a better understanding of the quality of parameter estimates and how they conform to the induced sparsity structure in Figure~\ref{fig:arma_DAG}, we conducted another study. First, we sampled one realization from 10 different ARMA(3,2) models. Then, with $\bar{p}=\bar{q}=5$ and $\lambda_0 \in \{0.5,1,2,3,5,10\}$, the HS-ARMA parameter estimates $(\hat{\ph}^i_{\lambda_0}, \hat{\th}^i_{\lambda_0})$ are calculated using Algorithm~\ref{alg:master_alg} and reported along with the true parameters $(\ph^{*,i},\th^{*,i})$ in Table \ref{tab:sample_detail} in Appendix A, where $i$ is the simulation index. Simple tuning of $\lambda_0$ allows the method to correctly identify the true orders $(p^*,q^*)$ and the estimated parameters conform to the underlying sparsity structure. Furthermore, the estimation errors are reasonably small. We also compared the estimation errors with pre-identified models where their parameters are estimated using a package -- see Figure \ref{fig:distance}. The mean of the HS-ARMA estimation error lies between those of the correctly and incorrectly identified (by one order in the AR component) models. For some samples with $\lambda_0$ around 2 or 3, the error of HS-ARMA is very close to the correctly identified ARMA model.

\begin{figure}[hbt]
  \centering
  \includegraphics[width=0.4\textwidth]{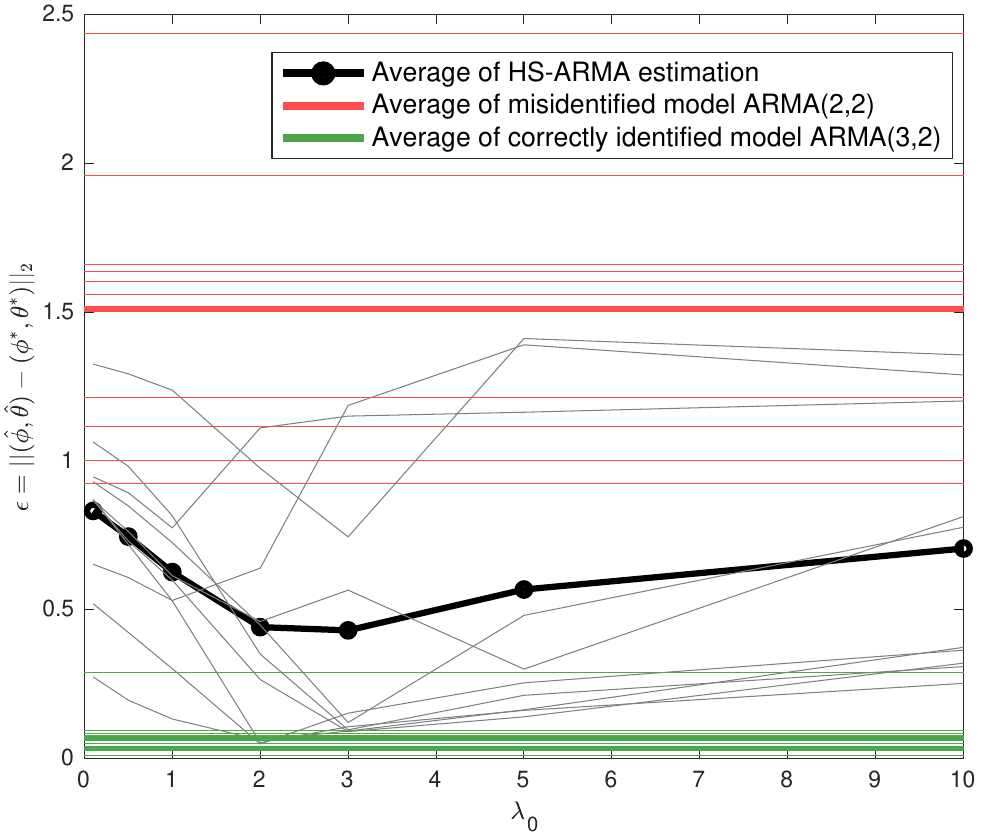}
  \caption{\footnotesize The estimation error of HS-ARMA and two pre-identified models. The three thicker lines are the mean estimation errors and the thinner lines represent estimation errors for each sample.}
  \label{fig:distance}
\end{figure}

\subsection{Prediction Performance}
We also compare the prediction performance of the HS-ARMA with those of correctly and incorrectly identified models using 10 realizations of one ARMA(3,2) model. For each realization, the estimated parameters with $\lambda_0\in \{0.5,1,2,3,5\}$ are used to forecast the process for the next 20 time points. Note that $\lambda_0=10$ is omitted because the fitted parameters were too sparse. Figure \ref{fig:prediction} illustrates the Root Mean Square Error (RMSE) for these methods.

For some $\lambda_0$, the RMSE of HS-ARMA is smaller than that of the correctly identified ARMA model. Furthermore, all HS-ARMA predictions for different $\lambda_0$ values have significantly lower RMSE compared to the incorrectly identified model.

\begin{figure}[hbt]
  \centering
  \includegraphics[width=0.4\textwidth]{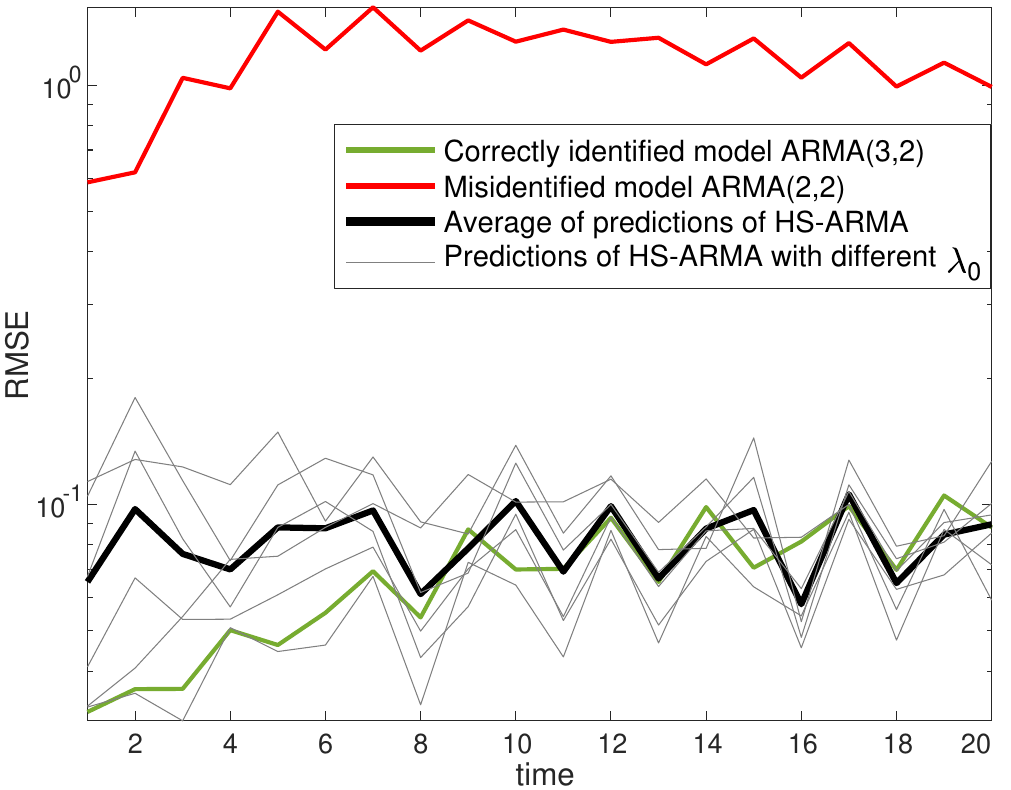}
  \caption{\footnotesize Prediction RMSEs for the HS-ARMA method vs. the correctly and incorrectly identified models. Each grey thin line is the RMSE of HS-ARMA with one $\lambda_0$ from $\{0.5,1,2,3,5\}$ from ten realizations and the black thick line is the average of the grey lines. The green and red lines are the RMSEs from the ten realizations for correctly and incorrectly identified models.}
  \label{fig:prediction}
\end{figure}

\subsection{Comparison with Other Methods}
We also compare our method with the one proposed by \cite{wilms2017sparse} (the ``bigtime'' R package) which also considers the hierarchical lag penalty, namely H-Lag penalty, and the lasso $\ell_1$ penalty. While the underlying optimization problem and the algorithm to solve it are fundamentally different than those proposed in this work, we believe the method in \cite{wilms2017sparse} specifically with the H-Lag penalty is the best benchmark for the proposed method. The parameter estimation method in \cite{wilms2017sparse} has two different phases. In the first phase, their method estimates a pure AR model, since every invertible ARMA process can be represented by an AR($\infty$) model. The estimated AR model is used to approximate the unobserved error terms which are used as the covariates of the MA component. In the second phase, a least-square objective regularized with a sparsity-inducing penalty is used to estimate the parameters of the final model.

We consider 9 different combinations of ARMA($p,q$) models with $p,q\in\{2,5,8\}$. For each combination, we construct four scenarios where the maximum absolute value of the roots of AR and MA components are chosen to be either 0.5 (invertible/stationary process) or 0.99 (close to none invertible/stationary process). In each scenario, we randomly generate an ARMA model and simulate 20 time series of length 200. Following the setting in \cite{wilms2017sparse}, the maximum potential lag $\bar{p}=\bar{q} =\lfloor0.75\sqrt{(T)}\rfloor = 10$, and AR and MA penalty coefficients $\lambda_p$ and $\lambda_q$ belong to 10 logarithmically spaced points between 1 and 100. The best combination of $\lambda_p$ and $\lambda_q$ parameters are determined by Bayesian Information Criterion (BIC). For each combination of $(p,q)$, the RMSEs are averaged over the final selected models. The results are presented in Figure~\ref{fig:compare}. In most cases, H-Lag penalty has lower RMSEs compared to the $\ell_1$ penalty; however, the proposed HS-ARMA method has the lowest RMSEs in all 9 cases.

\begin{figure}[hbt]
  \centering
  \includegraphics[width=0.4\textwidth]{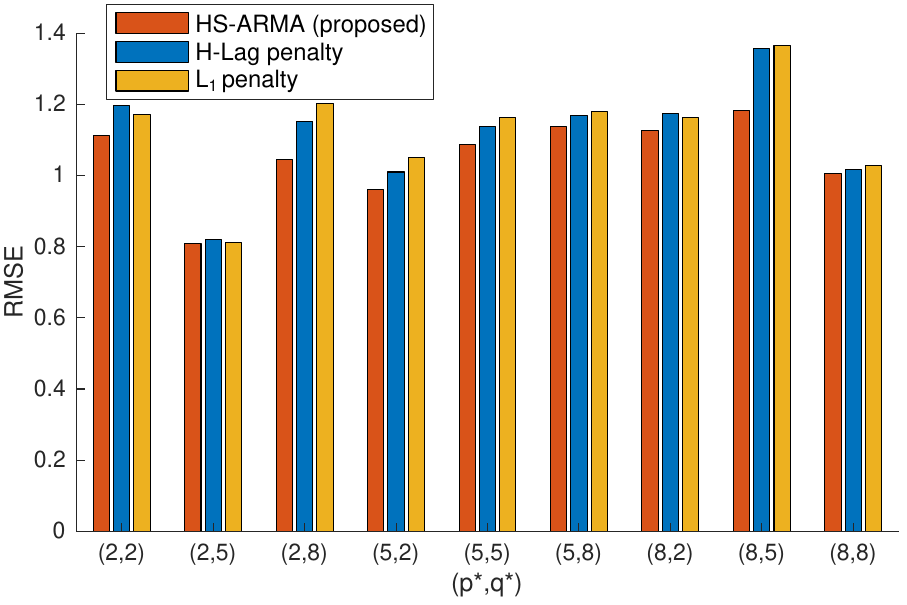}
  \caption{Comparison of the proposed HS-ARMA method with the hierarchical lag (H-Lag) and $\ell_1$ penalty methods from \cite{wilms2017sparse}.}
  \label{fig:compare}
\end{figure}

\subsection{Real Time Series Prediction}
We also implement the proposed HS-ARMA method on the real dataset, which is the Netflix stock prices from 4/15/2020 to 9/24/2021 with a total of 355 data points. To evaluate the influence of the penalty parameter of  HS-ARMA, the model is fitted with different combinations of $\lambda_{AR}$ and $\lambda_{MA}$. After the ARMA model is identified by HS-ARMA, we evaluate the performance of this model by different criteria, including AIC, AICc, and BIC.   The results are presented in Figure \ref{fig:netflix}. It is obvious that a larger penalty parameter will enforce the lower order of the ARMA model and the best model occurs when the value of the parameter is set properly. The advantage of HS-ARMA is that the search in the continuous $\lambda$ parameter space can be performed more efficiently compared to a brute-force grid search. For instance, the proposed algorithm can be easily incorporated into a hyperparameter optimization method (e.g., \cite{franceschi2018}) with a polynomial time solution to find the optimal $\lambda$ values.

\begin{figure*}
\vspace{-0.4cm}
    \centering
    \begin{subfigure}[t]{0.33\textwidth}
        \centering
        \includegraphics[width=\textwidth]{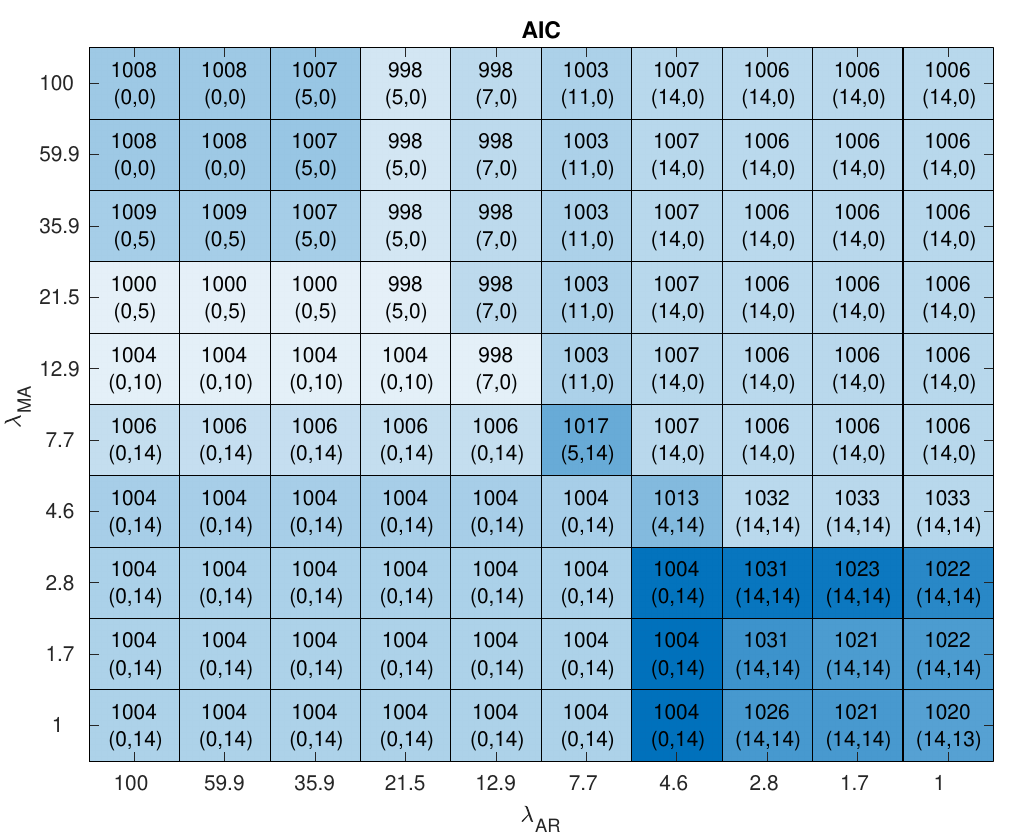}
    \end{subfigure}%
    \hfill
    \begin{subfigure}[t]{0.33\textwidth}
        \centering
        \includegraphics[width=\textwidth]{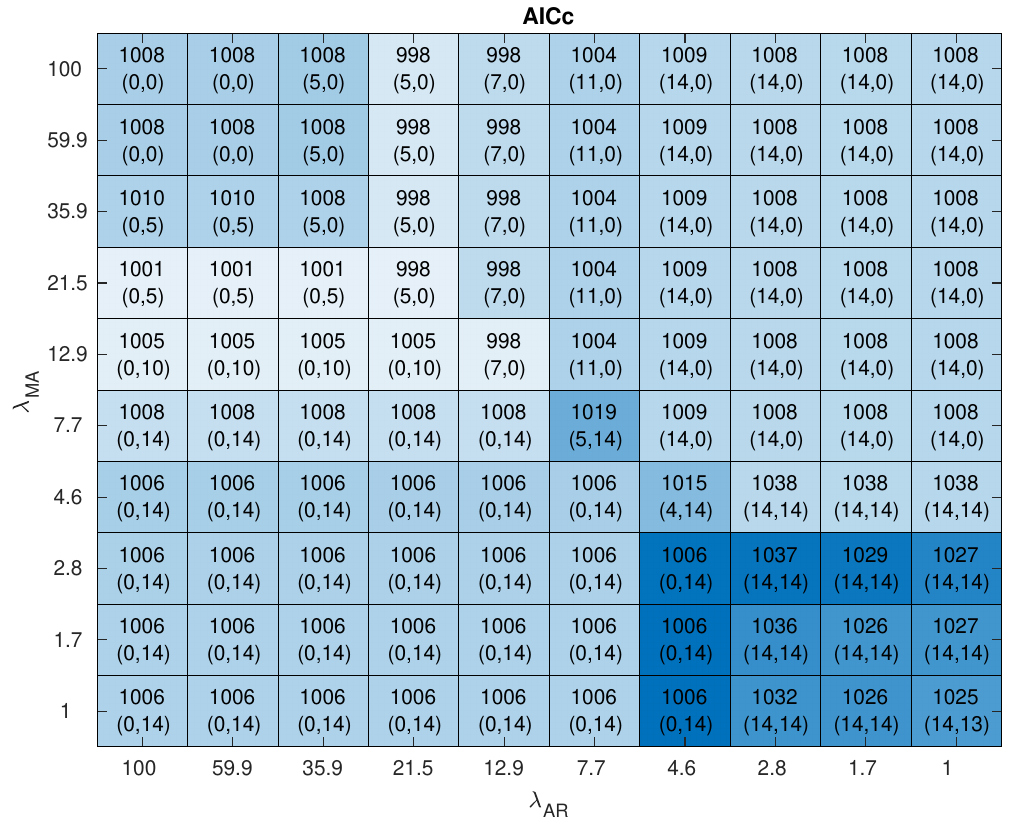}
    \end{subfigure}
    \hfill
    \begin{subfigure}[t]{0.33\textwidth}
        \centering
        \includegraphics[width=\textwidth]{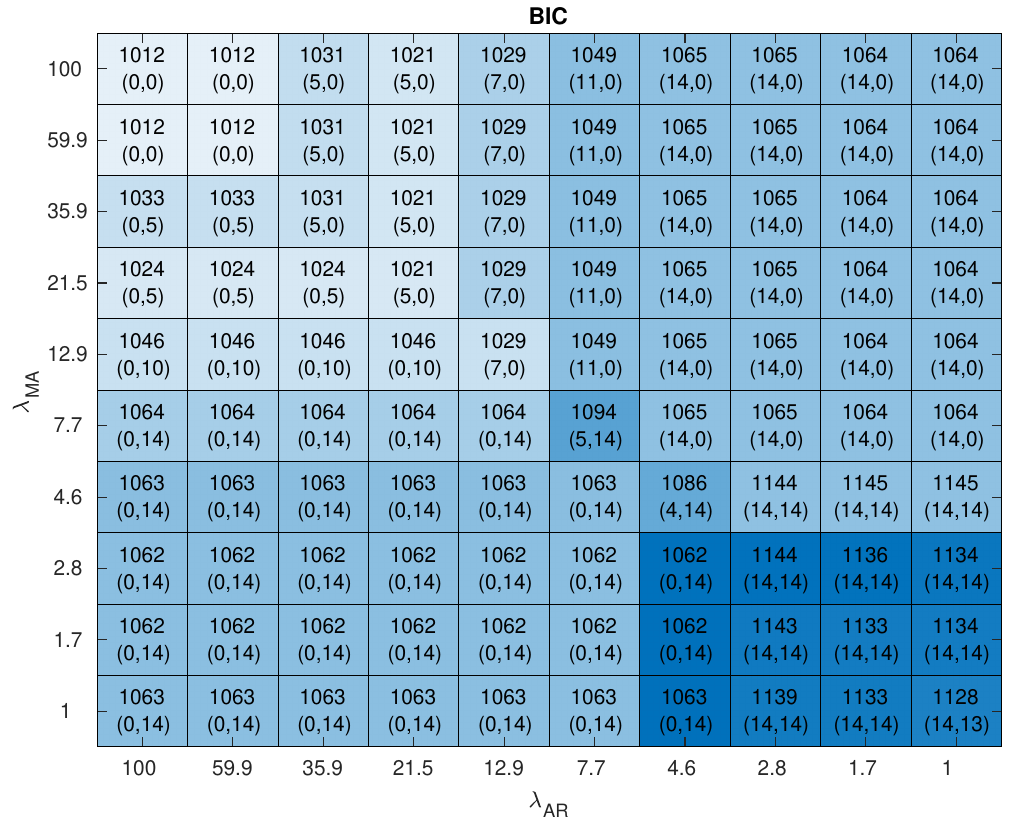}
    \end{subfigure}
    \caption{HS-ARMA fitting performance for the Netflix stock data. The identified ARMA($p,q$) models are shown inside the parentheses.}
    \label{fig:netflix}
\end{figure*}

\section{CONCLUDING REMARKS}\label{sec:conclusion}
This work presents a new learning framework that allows model identification and parameter estimation for ARMA time series models simultaneously. To do so, we use a hierarchical sparsity-inducing penalty, namely the Latent Overlapping Group (LOG) lasso, in the objective of the parameter estimation problem. While the addition of a nonsmooth (but convex) function to the objective of an already difficult nonconvex optimization seems restrictive, we propose a proximal block coordinate descent (BCD) algorithm that can solve the problem to a potential stationary point efficiently. Numerical simulation studies confirm the capabilities of the proposed learning framework to identify the true model and estimate its parameters with reasonably high accuracy. 

We believe that this study sheds some light on the hard optimization problem behind the parameter estimation of ARMA time series models (see our discussion in Section~\ref{sec:note}). Furthermore, we hope it motivates future studies to look into the convergence analysis of the proposed proximal BCD or other algorithms for such problem structures. Finally, the proposed framework can be extended to fit vector ARMA (VARMA) models where the underlying path graphs would contain multiple variables per node (see e.g. the bottom plot in Figure~\ref{fig:simple_path_graphs}), which we also leave for future studies.

\bibliographystyle{plainnat}
\bibliography{refs.bib}

\clearpage
\newpage

\begin{appendices}

\section{EXPERIMENT RESULTS}
\setlength\tabcolsep{2pt}

\setlength{\LTcapwidth}{\textwidth}
\footnotesize
\begin{longtable}{ccrrrrrrrcrrrrrr}
\caption{ Model identification and parameter estimation accuracy of the HS-ARMA method for ten simulations from ARMA(3,2) (one realization each). Parameters are estimated using the proximal BCD Algorithm \ref{alg:master_alg}. Boldface columns denote the best identified models with the lowest estimation errors.}\label{tab:sample_detail} \\

\toprule  
\endfirsthead

\multicolumn{16}{c}%
{\tablename\ \thetable{} -- continued from previous page} \\
\hline 
\endhead

\hline \multicolumn{3}{r}{{Continued on next page}} \\ \hline
\endfoot

\endlastfoot
                             &                       & \multicolumn{6}{c}{$\lambda_0$} &                &                & \multicolumn{6}{c}{$\lambda_0$}                                                                                                                         \\
                             & $(\phi^{*,1},\theta^{*,1})$ & 0.5                             & 1              & 2              & 3                               & 5              & 10    &  & $(\phi^{*,2},\theta^{*,2})$   & 0.5   & 1     & 2              & 3              & 5     & 10    \\
                                 \cmidrule(r){3-8}  \cmidrule(r){11-16}
 $\phi_1$               & -0.16                 & -0.23                           & -0.22          & -0.25          & \textbf{-0.10}                  & 0.01           & 0.05  &  & 0.13                    & 0.24  & 0.18  & \textbf{0.10}  & 0.08           & 0.04  & -0.14 \\
      $\phi_2$               & -0.98                 & -0.53                           & -0.60          & -0.83          & \textbf{-0.99}                  & -0.99          & -0.99 &  & 0.42                    & 0.41  & 0.44  & \textbf{0.45}  & 0.46           & 0.48  & 0.53  \\
      $\phi_3$               & -0.22                 & -0.21                           & -0.24          & -0.32          & \textbf{-0.16}                  & -0.05          & -0.01 &  & -0.44                   & -0.51 & -0.50 & \textbf{-0.42} & -0.39          & -0.34 & -0.15 \\
      $\phi_4$               & 0.00                  & 0.44                            & 0.38           & 0.15           & \textbf{0.00}                   & 0.00           & 0.00  &  & 0.00                    & 0.04  & 0.01  & \textbf{0.00}  & 0.00           & 0.00  & 0.00  \\
      $\phi_5$               & 0.00                  & 0.10                            & 0.06           & 0.00           & \textbf{0.00}                   & 0.00           & 0.00  &  & 0.00                    & 0.00  & 0.00  & \textbf{0.00}  & 0.00           & 0.00  & 0.00  \\
      $\theta_1$             & -0.45                 & -0.38                           & -0.38          & -0.35          & \textbf{-0.46}                  & -0.51          & -0.50 &  & 0.49                    & 0.37  & 0.43  & \textbf{0.50}  & 0.51           & 0.55  & 0.68  \\
      $\theta_2$             & 0.91                  & 0.38                            & 0.44           & 0.64           & \textbf{0.89}                   & 0.92           & 0.85  &  & 0.34                    & 0.28  & 0.29  & \textbf{0.31}  & 0.30           & 0.28  & 0.26  \\
      $\theta_3$             & 0.00                  & 0.29                            & 0.27           & 0.20           & \textbf{0.00}                   & -0.03          & 0.00  &  & 0.00                    & -0.03 & 0.00  & \textbf{0.00}  & 0.00           & 0.00  & 0.00  \\
      $\theta_4$             & 0.00                  & -0.45                           & -0.40          & -0.20          & \textbf{0.00}                   & 0.00           & 0.00  &  & 0.00                    & 0.01  & 0.00  & \textbf{0.00}  & 0.00           & 0.00  & 0.00  \\
      $\theta_5$             & 0.00                  & 0.00                            & 0.00           & 0.00           & \textbf{0.00}                   & 0.00           & 0.00  &  & 0.00                    & 0.00  & 0.00  & \textbf{0.00}  & 0.00           & 0.00  & 0.00  \\
      \hline
      $\epsilon_{\lambda_0}$ &                       & 0.98                            & 0.82           & 0.35           & \textbf{0.10}                   & 0.21           & 0.31  &  &                         & 0.20  & 0.13  & \textbf{0.06}  & 0.09           & 0.16  & 0.37  \\ 
           \hline \hline
                             &                       & \multicolumn{6}{c}{$\lambda_0$} &                &                & \multicolumn{6}{c}{$\lambda_0$}                                                                                                                         \\
                             & $(\phi^{*,3},\theta^{*,3})$ & 0.5                             & 1              & 2              & 3                               & 5              & 10    &  & $(\phi^{*,4},\theta^{*,4})$   & 0.5   & 1     & 2              & 3              & 5     & 10    \\
                                                              \cmidrule(r){3-8}  \cmidrule(r){11-16}

      $\phi_1$               & -0.64                 & -0.48                           & -0.46          & -0.47          & -0.55                           & \textbf{-0.66} & -0.75 &  & -0.88                   & -0.12 & -0.18 & -0.37          & \textbf{-0.51} & -0.15 & -0.16 \\
      $\phi_2$               & -0.70                 & -0.25                           & -0.33          & -0.46          & -0.57                           & \textbf{-0.61} & -0.40 &  & -0.28                   & 0.13  & 0.18  & 0.18           & \textbf{0.07}  & 0.28  & 0.24  \\
      $\phi_3$               & -0.56                 & -0.25                           & -0.29          & -0.37          & -0.46                           & \textbf{-0.45} & -0.24 &  & 0.37                    & 0.35  & 0.41  & 0.47           & \textbf{0.43}  & 0.29  & 0.27  \\
      $\phi_4$               & 0.00                  & 0.33                            & 0.26           & 0.14           & 0.04                            & \textbf{0.00}  & 0.01  &  & 0.00                    & -0.39 & -0.34 & -0.18          & \textbf{-0.11} & -0.21 & -0.07 \\
      $\phi_5$               & 0.00                  & 0.18                            & 0.11           & 0.00           & 0.00                            & \textbf{0.00}  & 0.00  &  & 0.00                    & 0.07  & 0.03  & 0.00           & \textbf{0.00}  & 0.00  & 0.00  \\
      $\theta_1$             & -0.49                 & -0.61                           & -0.63          & -0.62          & -0.55                           & \textbf{-0.44} & -0.28 &  & 0.84                    & 0.09  & 0.14  & 0.32           & \textbf{0.46}  & 0.03  & 0.00  \\
      $\theta_2$             & 0.49                  & 0.20                            & 0.29           & 0.39           & 0.41                            & \textbf{0.30}  & 0.00  &  & 0.57                    & 0.19  & 0.14  & 0.14           & \textbf{0.24}  & 0.00  & 0.00  \\
      $\theta_3$             & 0.00                  & 0.11                            & 0.07           & 0.00           & 0.00                            & \textbf{0.00}  & 0.01  &  & 0.00                    & -0.19 & -0.24 & -0.19          & \textbf{-0.06} & 0.00  & 0.00  \\
      $\theta_4$             & 0.00                  & -0.15                           & -0.09          & 0.00           & 0.00                            & \textbf{0.00}  & 0.01  &  & 0.00                    & 0.17  & 0.10  & 0.00           & \textbf{0.00}  & 0.00  & 0.00  \\
      $\theta_5$             & 0.00                  & 0.00                            & 0.00           & 0.00           & 0.00                            & \textbf{0.00}  & 0.00  &  & 0.00                    & 0.01  & 0.00  & 0.00           & \textbf{0.00}  & 0.00  & 0.00  \\
      \hline
      $\epsilon_{\lambda_0}$ &                       & 0.76                            & 0.61           & 0.46           & 0.57                            & \textbf{0.30}  & 0.81  &  &                         & 1.29  & 1.24  & 0.98           & \textbf{0.74}  & 1.41  & 1.36  \\

      \hline \hline
                             &                       & \multicolumn{6}{c}{$\lambda_0$} &                &                & \multicolumn{6}{c}{$\lambda_0$}                                                                                                                         \\
                             & $(\phi^{*,5},\theta^{*,5})$ & 0.5                             & 1              & 2              & 3                               & 5              & 10    &  & $(\phi^{*,6},\theta^{*,6})$   & 0.5   & 1     & 2              & 3              & 5     & 10    \\
                                                              \cmidrule(r){3-8}  \cmidrule(r){11-16}

      $\phi_1$               & -0.19                 & -0.32                           & \textbf{-0.37} & -0.37          & -0.51                           & -0.53          & -0.53 &  & -0.58                   & -0.21 & -0.25 & -0.43          & \textbf{-0.57} & -0.57 & -0.53 \\
      $\phi_2$               & 0.55                  & 0.04                            & \textbf{0.06}  & -0.12          & -0.20                           & -0.21          & -0.20 &  & 0.61                    & 0.57  & 0.67  & 0.67           & \textbf{0.60}  & 0.60  & 0.63  \\
      $\phi_3$               & 0.52                  & 0.05                            & \textbf{0.13}  & 0.16           & 0.14                            & 0.13           & 0.01  &  & 0.85                    & 0.48  & 0.58  & 0.74           & \textbf{0.82}  & 0.81  & 0.78  \\
      $\phi_4$               & 0.00                  & -0.16                           & \textbf{-0.11} & -0.10          & -0.06                           & -0.01          & 0.00  &  & 0.00                    & -0.17 & -0.20 & -0.12          & \textbf{0.00}  & 0.00  & -0.01 \\
      $\phi_5$               & 0.00                  & 0.00                            & \textbf{0.00}  & 0.01           & 0.00                            & 0.00           & 0.00  &  & 0.00                    & 0.19  & 0.09  & 0.00           & \textbf{0.00}  & 0.00  & 0.00  \\
      $\theta_1$             & -0.30                 & -0.17                           & \textbf{-0.12} & -0.13          & 0.00                            & 0.00           & 0.00  &  & 0.24                    & -0.16 & -0.12 & 0.05           & \textbf{0.19}  & 0.18  & 0.08  \\
      $\theta_2$             & -0.64                 & -0.18                           & \textbf{-0.24} & -0.01          & 0.00                            & 0.00           & 0.00  &  & 0.23                    & 0.40  & 0.29  & 0.21           & \textbf{0.18}  & 0.13  & 0.01  \\
      $\theta_3$             & 0.00                  & 0.22                            & \textbf{0.16}  & 0.00           & 0.00                            & 0.00           & 0.00  &  & 0.00                    & -0.06 & -0.05 & -0.01          & \textbf{0.00}  & 0.00  & 0.00  \\
      $\theta_4$             & 0.00                  & -0.03                           & \textbf{-0.02} & 0.00           & 0.00                            & 0.00           & 0.00  &  & 0.00                    & 0.04  & 0.00  & 0.00           & \textbf{0.00}  & 0.00  & 0.00  \\
      $\theta_5$             & 0.00                  & 0.00                            & \textbf{0.00}  & 0.00           & 0.00                            & 0.00           & 0.00  &  & 0.00                    & 0.00  & 0.00  & 0.00           & \textbf{0.00}  & 0.00  & 0.00  \\
      \hline
      $\epsilon_{\lambda_0}$ &                       & 0.89                            & \textbf{0.77}  & 1.11           & 1.15                            & 1.16           & 1.20  &  &                         & 0.73  & 0.60  & 0.26           & \textbf{0.09}  & 0.14  & 0.32  \\
      \hline \hline

                             &                       & \multicolumn{6}{c}{$\lambda_0$} &                &                & \multicolumn{6}{c}{$\lambda_0$}                                                                                                                         \\
                             & $(\phi^{*,7},\theta^{*,7})$ & 0.5                             & 1              & 2              & 3                               & 5              & 10    &  & $(\phi^{*,8},\theta^{*,8})$   & 0.5   & 1     & 2              & 3              & 5     & 10    \\
                                                              \cmidrule(r){3-8}  \cmidrule(r){11-16}

      $\phi_1$               & -0.34                 & 0.03                            & \textbf{-0.02} & 0.05           & 0.24                            & 0.41           & 0.39  &  & 0.92                    & 0.83  & 0.83  & \textbf{0.88}  & 0.93           & 0.93  & 0.93  \\
      $\phi_2$               & -0.53                 & -0.54                           & \textbf{-0.61} & -0.68          & -0.78                           & -0.84          & -0.82 &  & 0.91                    & 0.70  & 0.78  & \textbf{0.96}  & 0.91           & 0.90  & 0.89  \\
      $\phi_3$               & -0.65                 & -0.39                           & \textbf{-0.40} & -0.31          & -0.13                           & 0.00           & 0.00  &  & -0.88                   & -0.50 & -0.58 & \textbf{-0.83} & -0.87          & -0.87 & -0.86 \\
      $\phi_4$               & 0.00                  & 0.13                            & \textbf{0.05}  & 0.00           & 0.00                            & 0.00           & 0.00  &  & 0.00                    & 0.19  & 0.12  & \textbf{-0.04} & 0.00           & 0.00  & -0.01 \\
      $\phi_5$               & 0.00                  & 0.00                            & \textbf{0.00}  & 0.00           & 0.00                            & 0.00           & 0.00  &  & 0.00                    & -0.27 & -0.20 & \textbf{0.00}  & 0.00           & 0.00  & -0.01 \\
      $\theta_1$             & 0.32                  & -0.03                           & \textbf{0.01}  & -0.04          & -0.24                           & -0.47          & -0.37 &  & 0.73                    & 0.81  & 0.81  & \textbf{0.76}  & 0.67           & 0.64  & 0.57  \\
      $\theta_2$             & -0.49                 & -0.47                           & \textbf{-0.42} & -0.34          & -0.19                           & -0.02          & -0.01 &  & 0.21                    & 0.58  & 0.50  & \textbf{0.21}  & 0.13           & 0.10  & 0.01  \\
      $\theta_3$             & 0.00                  & 0.12                            & \textbf{0.09}  & 0.05           & 0.00                            & 0.00           & 0.00  &  & 0.00                    & 0.25  & 0.18  & \textbf{0.00}  & 0.00           & 0.00  & 0.00  \\
      $\theta_4$             & 0.00                  & -0.05                           & \textbf{-0.02} & -0.01          & 0.00                            & 0.00           & 0.00  &  & 0.00                    & 0.05  & 0.03  & \textbf{0.00}  & 0.00           & 0.00  & 0.00  \\
      $\theta_5$             & 0.00                  & 0.01                            & \textbf{0.00}  & 0.00           & 0.00                            & 0.00           & 0.00  &  & 0.00                    & 0.00  & 0.00  & \textbf{0.00}  & 0.00           & 0.00  & 0.00  \\
      \hline
      $\epsilon_{\lambda_0}$ &                       & 0.61                            & \textbf{0.53}  & 0.64           & 1.19                            & 1.39           & 1.29  &  &                         & 0.72  & 0.53  & \textbf{0.05}  & 0.11           & 0.16  & 0.25  \\
      \hline \hline
                             &                       & \multicolumn{6}{c}{$\lambda_0$} &                &                & \multicolumn{6}{c}{$\lambda_0$}                                                                                                                         \\
                             & $(\phi^{*,9},\theta^{*,9})$ & 0.5                             & 1              & 2              & 3                               & 5              & 10    &  & $(\phi^{*,10},\theta^{*,10})$ & 0.5   & 1     & 2              & 3              & 5     & 10    \\
                                                              \cmidrule(r){3-8}  \cmidrule(r){11-16}

      $\phi_1$               & -0.76                 & -0.66                           & -0.64          & \textbf{-0.75} & -0.81                           & -0.83          & -0.89 &  & 0.67                    & 0.21  & 0.26  & 0.37           & \textbf{0.56}  & 0.81  & 1.04  \\
      $\phi_2$               & -0.46                 & -0.22                           & -0.29          & \textbf{-0.44} & -0.45                           & -0.42          & -0.41 &  & 0.20                    & 0.16  & 0.22  & 0.34           & \textbf{0.30}  & 0.00  & -0.33 \\
      $\phi_3$               & -0.59                 & -0.38                           & -0.45          & \textbf{-0.58} & -0.55                           & -0.49          & -0.42 &  & -0.41                   & -0.05 & -0.10 & -0.30          & \textbf{-0.44} & -0.29 & -0.09 \\
      $\phi_4$               & 0.00                  & 0.19                            & 0.15           & \textbf{0.00}  & 0.00                            & 0.00           & 0.00  &  & 0.00                    & -0.18 & -0.21 & -0.14          & \textbf{-0.01} & 0.00  & 0.00  \\
      $\phi_5$               & 0.00                  & 0.11                            & 0.05           & \textbf{0.00}  & 0.00                            & 0.00           & 0.00  &  & 0.00                    & -0.15 & -0.09 & 0.00           & \textbf{0.00}  & 0.00  & 0.00  \\
      $\theta_1$             & -0.84                 & -0.90                           & -0.92          & \textbf{-0.82} & -0.76                           & -0.74          & -0.62 &  & -0.57                   & -0.13 & -0.18 & -0.29          & \textbf{-0.47} & -0.74 & -0.90 \\
      $\theta_2$             & 0.17                  & 0.06                            & 0.15           & \textbf{0.15}  & 0.09                            & 0.03           & 0.00  &  & -0.30                   & -0.23 & -0.30 & -0.43          & \textbf{-0.40} & -0.14 & 0.00  \\
      $\theta_3$             & 0.00                  & 0.09                            & 0.03           & \textbf{0.00}  & 0.00                            & 0.01           & 0.00  &  & 0.00                    & -0.37 & -0.32 & -0.12          & \textbf{0.00}  & 0.00  & -0.01 \\
      $\theta_4$             & 0.00                  & 0.00                            & 0.00           & \textbf{0.00}  & 0.00                            & 0.01           & 0.00  &  & 0.00                    & -0.05 & 0.00  & 0.00           & \textbf{0.00}  & 0.00  & 0.00  \\
      $\theta_5$             & 0.00                  & 0.00                            & 0.00           & \textbf{0.00}  & 0.00                            & 0.00           & 0.00  &  & 0.00                    & 0.00  & 0.00  & 0.00           & \textbf{0.00}  & 0.00  & 0.00  \\
      \hline
      $\epsilon_{\lambda_0}$ &                       & 0.42                            & 0.30           & \textbf{0.05}  & 0.15                            & 0.25           & 0.36  &  &                         & 0.85  & 0.72  & 0.45           & \textbf{0.12}  & 0.48  & 0.78  \\

\bottomrule
\end{longtable}
\end{appendices}

\end{document}